\newtheorem{definition}{Definition}
\newtheorem{theorem}{Theorem}
\newtheorem{remark}{\indent \bf Remark}
\newtheorem{proposition}{Proposition}
\def\snr    {\mbox{\scriptsize\sf SNR}}
\begin{document}
\title{Coding versus ARQ in Fading Channels: How reliable should the PHY be?}

\author{Peng~Wu~and~Nihar~Jindal\\
University of Minnesota, Minneapolis, MN 55455\\
Email: \{pengwu,nihar\}@umn.edu}
\maketitle
\vspace{-15mm}

\begin{abstract}\label{sec-abs}
This paper studies the tradeoff between channel coding and ARQ (automatic repeat request) in Rayleigh block-fading
channels. A heavily coded system corresponds to a low transmission rate with few ARQ re-transmissions, whereas lighter
coding corresponds to a higher transmitted rate but more re-transmissions. The optimum error probability, where optimum refers
to the maximization of the average successful throughput, is derived and is shown to be a decreasing function
of the average signal-to-noise ratio and of the channel diversity order. A general conclusion of the work is that the optimum error
probability is quite large (e.g., $10\%$ or larger) for reasonable channel parameters, and that operating at a very small error
probability can lead to a significantly reduced throughput.  This conclusion holds even when a number of practical ARQ considerations, such as delay constraints and acknowledgement feedback errors, are taken into account.
\end{abstract}

\section{Introduction}\label{sec-intro}
In contemporary wireless communication systems, ARQ (automatic repeat request) is generally used above the physical layer (PHY) to
compensate for packet errors: incorrectly decoded packets are detected by the receiver, and a negative acknowledgement is sent back to the transmitter to request a re-transmission.
In such an architecture there is a natural tradeoff between the transmitted rate and ARQ re-transmissions.  A high transmitted rate corresponds
to many packet errors and thus many ARQ re-transmissions, but each successfully received packet contains many information bits.
On the other hand, a low transmitted rate corresponds to few ARQ re-transmissions, but few information bits are
contained per packet.  Thus, a fundamental design challenge is determining the transmitted rate that maximizes the rate at which
bits are successfully delivered.  Since the packet error probability is an increasing function of the transmitted rate,
this is equivalent to determining the optimal packet error probability, i.e., the optimal PHY reliability level.

We consider a wireless channel where the transmitter chooses the rate based only on the fading statistics because
knowledge of the instantaneous channel conditions is not available (e.g., high velocity mobiles in cellular systems).
The transmitted rate-ARQ tradeoff is interesting in this setting because the packet error probability depends on the transmitted rate
in a non-trivial fashion; on the other hand, this tradeoff is somewhat trivial when instantaneous channel state information at the transmitter (CSIT) is available (see Remark \ref{remk1}).

We begin by analyzing an idealized system, for which we find that making the PHY too reliable can lead to a significant
penalty in terms of the achieved goodput (long-term average successful \emph{throughput}), and that the optimal packet error
probability is decreasing in the average SNR and in the fading selectivity experienced by each transmitted codeword.
We also see that for a large level of system parameters, choosing an error probability of $10\%$ leads to near-optimal
performance.  We then consider a number of important practical considerations, such as a limit on the number of ARQ re-transmissions
and unreliable acknowledgement feedback.  Even after taking these issues into account, we find that a relatively
unreliable PHY is still preferred.  Because of fading, the PHY can be made reliable only if the transmitted rate is significantly
reduced.  However, this reduction in rate is not made up for by the corresponding reduction in ARQ re-transmissions.

%
%
%

\subsection{Prior Work}

There has been some recent work on the joint optimization of packet-level erasure-correction codes (e.g., fountain codes)
and PHY-layer error correction \cite{luby2007rmd,berger2008optimizing,Courtade,ChenSub}.  The fundamental metric with erasure codes is the product of the transmitted rate and the packet success probability, which is the same as in the idealized ARQ setting studied in Section \ref{sec-ideal}.
Even in that idealized setting, our work differs in a number of ways.  References \cite{luby2007rmd,Courtade,ChenSub}  study
 multicast (i.e., multiple receivers) while \cite{berger2008optimizing} considers unicast assuming no diversity per transmission, whereas
 our focus is on the unicast setting with diversity per transmission.  Furthermore, our analysis provides a general explanation of
 how the PHY reliability should depend on both the diversity and the average SNR.  In addition, we consider a number of practical issues
 specific to ARQ, such as acknowledgement errors (Section \ref{sec-nonideal}), as well as hybrid-ARQ (Section \ref{sec-harq}).

\section{System Model}\label{sec-sys}
We consider a Rayleigh block-fading channel where the channel remains constant within each block but changes independently from one block to another.
The $t$-th ($t=1,2,\cdots$) received channel symbol in the $i$-th ($i=1,2, \cdots$) fading block $y_{t,i}$ is given by
\begin{eqnarray}
y_{t,i} = \sqrt{\snr}~h_{i}x_{t,i} + z_{t,i},
\end{eqnarray}
where $h_i\sim\mathcal{CN}(0,1)$ represents the channel gain and is i.i.d. across fading blocks, $x_{t,i}\sim\mathcal{CN}(0,1)$ denotes the Gaussian input symbol constrained to have unit average power, and $z_{t,i}\sim\mathcal{CN}(0,1)$ models the additive Gaussian noise assumed to be i.i.d. across channel uses and fading blocks. Although we focus on single antenna systems and Rayleigh fading channel, our model can be easily extended to multiple-input and multiple-output (MIMO) systems and other fading distributions as commented upon in Remark \ref{remk3}.

Each transmission (i.e., codeword) is assumed to span $L$ fading blocks, and thus $L$ represents the time/frequency selectivity
experienced by each codeword.  In analyzing ARQ systems, the packet error probability is the key quantity. If a strong channel code (with suitably
long blocklength) is used, it is well known that the packet error probability is accurately approximated by the mutual
information outage probability \cite{CaTaBi,FabregasCaire,PrasadVaranasi_IT06,Malkamaki}.  Under this assumption (which is examined in
Section \ref{sec-finite_blk}), the packet error
probability for transmission at rate $R$ bits/symbol is given by \cite[eq (5.83)]{TseVis}:
\begin{eqnarray}\label{outage_nHARQ}
\varepsilon(\snr,L,R) = \mathbb{P} \left[\frac{1}{L}\sum_{i=1}^L\log_2(1+\snr|h_i|^2)\leq R \right].
\end{eqnarray}
Here we explicitly denote the dependence of the error probability on the average signal-to-noise ratio $\snr$, the selectivity order $L$, and the
transmitted rate $R$.  We are generally interested in the relationship between $R$ and $\varepsilon$ for particular (fixed) values of $\snr$ and
$L$.  When $\snr$ and $L$ are constant, $R$ can be inversely computed given some $\varepsilon$; thus,
throughout the paper we replace $R$ with $R_{\varepsilon}$ wherever the relationship between $R$ and $\varepsilon$ needs to be explicitly pointed out.

The focus of the paper is on simple ARQ, in which packets received in error are re-transmitted and decoding is performed only on the basis
of the most recent transmission.\footnote{\emph{Hybrid}-ARQ, which is a more sophisticated and powerful form of ARQ, is considered in Section \ref{sec-harq}.}  More specifically, whenever the receiver detects that a codeword has been decoded incorrectly, a NACK is fed back to the transmitter.
On the other hand, if the receiver detects correct decoding an ACK is fed back.  Upon reception of an ACK, the transmitter moves on to the next
packet, whereas reception of a NACK triggers re-transmission of the previous packet.  ARQ transforms the system into a variable-rate scheme,
and the relevant performance metric is the rate at which packets are \textit{successfully} received.  This quantity is generally
referred to as the long-term average \emph{goodput}, and is clearly defined in each of the relevant sections. And consistent with the assumption of no CSIT (and fast fading), we assume fading is independent across re-transmissions.

\section{OPTIMAL PHY Reliability in the Ideal Setting}\label{sec-ideal}

In this section we investigate the optimal PHY reliability level under a number of idealized assumptions.  Although not entirely realistic,
this idealized model yields important design insights. In particular, we make the following key assumptions:
\begin{itemize}
\item Channel codes that operate at the mutual information limit (i.e., packet error probability is equal to the mutual information outage probability).
\item Perfect error detection at the receiver.
\item Unlimited number of ARQ re-transmissions.
\item Perfect ACK/NACK feedback.
\end{itemize}
In Section \ref{sec-nonideal} we relax these assumptions, and find that the insights from this idealized setting generally also apply to real systems.

In order to characterize the long-term goodput in this idealized setting.  In order to do so, we must quantify the number of
transmission attempts/ARQ rounds needed for successful transmission of each packet.  If we use $X_i$ to denote the number of ARQ rounds for the \emph{i}-th packet, then a total of $\sum_{i=1}^J X_i$ ARQ rounds are used for transmitting $J$ packets; note that the $X_i$'s are i.i.d. due to the independence of fading and noise across ARQ rounds. Each codeword is assumed to span $n$ channel symbols and to contain $b$ information bits, corresponding
to a transmitted rate of $R=b/n$ bits/symbols.  The average rate at which bits are successfully delivered is the ratio of the bits delivered to the total number of channel symbols required.  The goodput $\eta$ is the long-term average at which bits are successfully delivered, and
by taking $J\rightarrow\infty$ we get \cite{CaireTuninetti}:
\begin{eqnarray}\label{gput_nHARQ}
\eta = \lim_{J\rightarrow\infty}\frac{Jb}{n\sum_{i=1}^JX_i}=\lim_{J\rightarrow\infty}\frac{\frac{b}{n}}{\frac{1}{J}\sum_{i=1}^JX_i}=\frac{R}{\mathbb{E}[X]},
\end{eqnarray}
where $X$ is the random variable describing the ARQ rounds required for successful delivery of a packet.

Because each ARQ round is successful with probability $1-\varepsilon$, with $\varepsilon$ defined in (\ref{outage_nHARQ}),
and rounds are independent, $X$ is geometric
with parameter $1-\varepsilon$ and thus $\mathbb{E}[X]=1/(1-\varepsilon)$. Based upon (\ref{gput_nHARQ}), we have
\begin{eqnarray}\label{gput_nHARQ2}
\eta\triangleq R_{\varepsilon}(1-\varepsilon),
\end{eqnarray}
where the transmitted rate is denoted as $R_{\varepsilon}$ to emphasize its dependence on $\varepsilon$.

Based on this expression, we can immediately see the tradeoff between the transmitted rate, i.e. the number of bits per packet,
and the number of ARQ re-transmissions per packet: a large $R_{\varepsilon}$ means many bits are contained in each
packet but that many re-transmissions are required, whereas a small $R_{\varepsilon}$ corresponds to fewer bits per packet
and fewer re-transmissions. Our objective is to find the optimal (i.e., goodput maximizing) operating point on this tradeoff curve
for any given parameters $\snr$ and $L$.

Because $R_{\varepsilon}$ is a function of $\varepsilon$ (for $\snr$ and $L$ fixed), this
one-dimensional optimization can be phrased in terms of $R_{\varepsilon}$ or $\varepsilon$.  We
find it most insightful to consider $\varepsilon$, which leads to the following definition:

\begin{definition} \label{def1}
The optimal packet error probability, where optimal refers to goodput maximization with goodput defined in (\ref{gput_nHARQ}),
for average signal-to-noise ratio $\snr$ and per-codeword selectivity order $L$ is:
\begin{eqnarray}\label{opt_gput_nHARQ}
\varepsilon^{\star}(\snr,L)\triangleq\arg \max_{\varepsilon} ~ R_{\varepsilon}(1-\varepsilon).
\end{eqnarray}
\end{definition}

By finding $\varepsilon^{\star}(\snr,L)$, we thus determine the optimal PHY reliability level and how this optimum depends on
channel parameters $\snr$ and $L$, which are generally static over the timescale of interest.\footnote{Note that in this definition
we assume all possible code rates are possible; nonetheless, this formulation provides valuable insight for systems in which
the transmitter must choose from a finite set of code rates.}

For $L=1$, a simple calculation shows \footnote{The expression for $L=1$ is also derived in \cite{berger2008optimizing}. However, authors in \cite{berger2008optimizing} only consider $L=1$ case rather than $L>1$ scenarios, which are further investigated in our work.}
\begin{eqnarray}
\varepsilon^{\star}(\snr,1) = 1- e^{\left(1-\snr\right)/\left(\snr\cdot W(\snr)\right)},
\end{eqnarray}
where $W(\cdot)$ is the Lambert W function \cite{corless1996lambertw}. Unfortunately, for $L>1$ it does not seem feasible to find an exact analytical solution because a closed-form expression for the outage probability exists only for $L=1$. However, the optimization in (\ref{opt_gput_nHARQ}) can be easily solved numerically (for arbitrary $L$).  In addition, an accurate approximation to $\varepsilon^{\star}(\snr,L)$ can be solved analytically, as we detail
in the next subsection.

In order to provide a general understanding of $\varepsilon^{\star}$, Fig. \ref{fig:gput_eps} contains a plot of goodput $\eta$ (numerically computed)
versus outage probability $\varepsilon$ for $L=2$ and $L=5$ at $\snr=0$ and $10$ dB. For each curve, the goodput-maximizing value of $\varepsilon$ is circled. From this figure, we make the following observations:
\begin{itemize}
\item Making the physical layer too reliable or too unreliable yields poor goodput.
\item The optimal outage probability decreases with $\snr$ and $L$.
\end{itemize}
These turn out to be the key behaviors of the coding-ARQ tradeoff, and the remainder of this section is devoted to analytically explain
these behaviors through a Gaussian approximation.

\begin{remark}\label{remk1}
Throughput the paper we consider the setting \emph{without} channel state information at the transmitter (CSIT). If there is CSIT,
which generally is the case when the fading is slow relative to the delay in the channel feedback loop, the optimization problem in
\emph{Definition \ref{def1}} turns out to be trivial. When CSIT is available, the channel is essentially AWGN with an instantaneous SNR that is
determined by the fading realization but is known to the TX.
If a capacity-achieving code with infinite codeword block-length is used in the AWGN channel,
the relationship between error and rate is a step-function:
\begin{subnumcases}{\varepsilon=}
0, & if $R<\log_2\left(1+\snr|h|^2\right)$\\
1, & if $R\geq \log_2\left(1+\snr|h|^2\right)$.
\end{subnumcases}
Thus, it is optimal to choose a rate very slightly below the instantaneous capacity $\log_2\left(1+\snr|h|^2\right)$. For
realistic codes with finite blocklength, the $\varepsilon$-$R$ curve is not a step function but nonetheless is very steep.
For example, for turbo codes the waterfall characteristic of error vs. SNR curves (for fixed rate) translates to a
step-function-like error vs. rate curve for fixed SNR. Therefore, the transmitted rate should be chosen close to the bottom
of the step function.
\end{remark}


\subsection{Gaussian Approximation}\label{sec_gauss}

The primary difficulty in finding $\varepsilon^{\star}(\snr,L)$
stems from the fact that the outage probability in (\ref{outage_nHARQ}) can
only be expressed as an $L$-dimensional integral, except for the special
case $L=1$.  To circumvent this problem, we utilize a Gaussian approximation
to the outage probability used in prior work \cite{Smith_Shafi, Barriac, wu2008}.
The random variable $\frac{1}{L}\sum_{i=1}^L\log_2\left(1+\snr |h_{i}|^2\right)$ is approximated
by a $\mathcal{N}\left(\mu(\snr),\sigma^2(\snr)/L\right)$ random variable, where
$\mu(\snr)$ and $\sigma^2(\snr)$ are the mean and the variance of
$\log_2\left(1+\snr |h|^2\right)$, respectively:
\begin{eqnarray}
\mu(\snr) &=& \mathbb{E}_{|h|}\left[\log_2(1+\snr|h|^2)\right],\\
\sigma^2(\snr)&=&\mathbb{E}_{|h|}\left[\log_2(1+\snr|h|^2)\right]^2-\mu^2(\snr).
\end{eqnarray}
Closed forms for these quantities can be found in \cite{Alouini, McKay}.
Based on this approximation we have
\begin{eqnarray}
\varepsilon&\approx&
Q\left(\frac{\sqrt{L}}{\sigma(\snr)}(\mu(\snr)-R_{\varepsilon})\right),
\end{eqnarray}
where $Q(\cdot)$ is the tail probability of a standard normal.
Solving this equation for $R_{\varepsilon}$ and plugging into (\ref{gput_nHARQ2})
yields the following approximation for the goodput, which we denote as $\eta_g$:
\begin{eqnarray}\label{thput_app}
\eta_g = \left(\mu(\snr)-Q^{-1}(\varepsilon)\frac{\sigma(\snr)}{\sqrt{L}}\right)(1-\varepsilon),
\end{eqnarray}
where $Q^{-1}(\varepsilon)$ is the inverse of the $Q$ function.

\subsection{Optimization of Goodput Approximation}\label{sec_opt}
The optimization of $\eta_g$ turns out to be more tractable.  We first rewrite $\eta_g$ as
\begin{eqnarray} \label{eta_g_2}
\eta_g = \mu(\snr)\left(1-\kappa \cdot Q^{-1}(\varepsilon)\right)(1-\varepsilon),
\end{eqnarray}
where the constant $\kappa \in(0,1)$ is the $\mu$-normalized standard deviation
of the  received mutual information:
\begin{equation}
\kappa \triangleq \frac{\sigma(\snr)}{\mu(\snr)\sqrt{L}}.
\end{equation}
We can observe that $\kappa$ decreases in $\snr$ and $L$. We now define $\varepsilon_g^{\star}$ as the $\eta_g$-maximizing outage probability:
\begin{eqnarray}\label{opt_no_control_app}
\varepsilon_g^{\star}(\snr,L)\triangleq\arg\max_{\varepsilon} ~ \left(1-\kappa \cdot Q^{-1}(\varepsilon)\right)(1-\varepsilon),
\end{eqnarray}
where we have pulled out the constant $\mu(\snr)$ from (\ref{eta_g_2}) because
it does not affect the maximization.
\begin{proposition}\label{pro1}
The PHY reliability level that maximizes the Gaussian approximated goodput is the unique solution to the following fixed point equation:
\begin{eqnarray}\label{first_der}
\left(Q^{-1}(\varepsilon_g^{\star})-(1-\varepsilon_g^{\star})\cdot\left(Q^{-1}(\varepsilon)\right)'\mid_{\varepsilon=\varepsilon_g^{\star}}\right)^{-1}=\kappa.
\end{eqnarray}
Furthermore, $\varepsilon_g^{\star}$ is increasing in $\kappa$.
\end{proposition}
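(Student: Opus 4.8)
The plan is to treat the bracketed objective in (\ref{opt_no_control_app}) as a one-dimensional smooth function on $(0,1)$, read off its first-order condition to recover (\ref{first_der}), and then establish global concavity so that the unique stationary point is the maximizer. Write $g(\varepsilon)\triangleq Q^{-1}(\varepsilon)$ and $f(\varepsilon)\triangleq\left(1-\kappa g(\varepsilon)\right)(1-\varepsilon)$. The product rule gives $f'(\varepsilon)=\kappa\left[g(\varepsilon)-(1-\varepsilon)g'(\varepsilon)\right]-1$, so setting $f'(\varepsilon)=0$ and solving for $\kappa$ yields exactly $\kappa=\left(g(\varepsilon)-(1-\varepsilon)g'(\varepsilon)\right)^{-1}$, which is (\ref{first_der}). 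Thus any maximizer satisfies the stated fixed-point equation, and it remains to prove uniqueness of the root and that this root is the maximizer.

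First I would record the derivatives of $g=Q^{-1}$. Differentiating $Q(g(\varepsilon))=\varepsilon$ and using $Q'=-\phi$ (with $\phi$ the standard normal density) gives $g'(\varepsilon)=-1/\phi(g(\varepsilon))<0$; differentiating again and using $\phi'(x)=-x\phi(x)$ yields the clean identity $g''(\varepsilon)=g(\varepsilon)\,(g'(\varepsilon))^2$. Define $h(\varepsilon)\triangleq g(\varepsilon)-(1-\varepsilon)g'(\varepsilon)$, so the first-order condition reads $h(\varepsilon)=1/\kappa$ and $f'(\varepsilon)=\kappa h(\varepsilon)-1$. The central step is to show that $h$ is strictly decreasing on $(0,1)$. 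A short computation gives $h'(\varepsilon)=2g'(\varepsilon)-(1-\varepsilon)g''(\varepsilon)=g'(\varepsilon)\left[2-(1-\varepsilon)g(\varepsilon)g'(\varepsilon)\right]$, and since $g'<0$ it suffices to show the bracket is positive. Using $1-\varepsilon=\Phi(g(\varepsilon))$ (because $Q=1-\Phi$, with $\Phi$ the standard normal CDF) together with $g'=-1/\phi(g)$, the bracket simplifies to $2+g(\varepsilon)\Phi(g(\varepsilon))/\phi(g(\varepsilon))$.

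This is where the only real obstacle lies. For $\varepsilon\le 1/2$ we have $g\ge0$, so the extra term is nonnegative and the bracket is at least $2$; for $\varepsilon>1/2$ we have $g<0$, the extra term is negative, and positivity is not automatic. Here I would invoke the classical upper Mills-ratio bound $Q(u)<\phi(u)/u$ for $u>0$: writing $u=-g>0$ and using $\Phi(g)=Q(u)$, $\phi(g)=\phi(u)$ gives $\Phi(g)/\phi(g)<1/u=-1/g$, and multiplying by $g<0$ (which reverses the inequality) yields $g\,\Phi(g)/\phi(g)>-1$. Hence the bracket exceeds $2-1=1>0$ for all $\varepsilon\in(0,1)$, so $h'<0$ throughout and $h$ is strictly decreasing. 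Establishing this sign globally, rather than only on $\varepsilon<1/2$ where the naive argument works, is the crux of the proof.

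The conclusions then follow quickly. Since $h$ is continuous and strictly decreasing with $h(\varepsilon)\to+\infty$ as $\varepsilon\to0^+$ and $h(\varepsilon)\to-\infty$ as $\varepsilon\to1^-$ (checked from the density asymptotics), $h$ is a bijection from $(0,1)$ onto $\mathbb{R}$; as $\kappa\in(0,1)$ forces $1/\kappa\in(1,\infty)$, the equation $h(\varepsilon)=1/\kappa$ has a unique solution $\varepsilon_g^{\star}$, giving uniqueness. Moreover $f''(\varepsilon)=\kappa h'(\varepsilon)<0$, so $f$ is strictly concave on $(0,1)$ and this unique stationary point is its global maximizer. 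Finally, implicitly differentiating $h(\varepsilon_g^{\star}(\kappa))=1/\kappa$ gives $\frac{d\varepsilon_g^{\star}}{d\kappa}=-\kappa^{-2}/h'(\varepsilon_g^{\star})>0$ since $h'<0$; equivalently $\varepsilon_g^{\star}=h^{-1}(1/\kappa)$ is the composition of the decreasing map $h^{-1}$ with the decreasing map $\kappa\mapsto1/\kappa$. This shows $\varepsilon_g^{\star}$ is increasing in $\kappa$, completing the proof.
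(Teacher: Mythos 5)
Your proof is correct and follows essentially the same route as the paper's: both reduce the problem to strict concavity of the objective, whose second derivative is (up to positive factors) your bracket $2+g\Phi(g)/\phi(g)$, and both handle the only nontrivial case ($g<0$, i.e.\ $\varepsilon>1/2$) with the same Mills-ratio bound $Q(u)<\phi(u)/u$ that the paper cites from Kingsbury. The only difference is cosmetic: you additionally verify the boundary limits of $h$ to guarantee existence of the root, a point the paper leaves implicit.
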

\begin{proof}
See Appendix \ref{pf_a}.
\end{proof}
We immediately see that $\varepsilon_g^{\star}$ depends on the channel parameters only through $\kappa$. Furthermore,
because $\kappa$ is decreasing in $\snr$ and $L$,  we see that $\varepsilon_g^{\star}$ decreases in $L$ (i.e., the channel selectivity) and
$\snr$.  Straightforward analysis shows that  $\varepsilon_g^{\star}$ tends to zero as $L$ increases approximately as $1/\sqrt{L \log L}$,
while  $\varepsilon_g^{\star}$ tends to zero with $\snr$ approximately as $1/\sqrt{\log \snr}$.


In Fig. \ref{fig:opt_eps_SNR}, the exact optimal $\varepsilon^{\star}$ and
the approximate-optimal $\varepsilon_g^{\star}$ are plotted vs. $\snr$ (dB)
for $L=2,5,$ and $10$. The Gaussian approximation is seen to be reasonably accurate, and
most importantly, correctly captures behavior with respect to $L$ and $\snr$.

In order to gain an intuitive understanding of the optimization,
in Fig. \ref{fig:phy_arg} the success probability $1-\varepsilon$ (left) and the
goodput $\eta=R_{\varepsilon}(1-\varepsilon)$ (right) are plotted versus the transmitted rate $R$
for $\snr=10$ dB. For each $L$ the goodput-maximizing operating point is circled.
First consider the curves for $L=5$.
For $R$ up to approximately $1.5$ bits/symbol the success probability is nearly one, i.e., $\varepsilon \approx 0$.
As a result, the goodput $\eta$ is approximately equal to $R$
for $R$ up to $1.5$.  When $R$ is increased beyond $1.5$ the success probability begins
to decrease non-negligibly but the goodput nonetheless increases with $R$ because the increased
transmission rate makes up for the loss in success probability (i.e., for the
ARQ re-transmissions).  However, the goodput peaks at $R = 2.3$ because beyond this point the
increase in transmission rate no longer makes up for the increased re-transmissions; visually,
the optimum rate (for each value of $L$) corresponds to a point beyond which the
success probability begins to drop off sharply with the transmitted rate.

To understand the effect of the selectivity order $L$, notice that increasing $L$ leads to a steepening of
the success probability-rate curve. This has the effect of moving the goodput curve closer to the transmitted
rate, which leads to a larger optimum rate and a larger optimum success probability ($1-\varepsilon^{\star}$).
To understand why $\varepsilon^{\star}$ decreases with $\snr$, based upon the
rewritten version of $\eta_g$ in (\ref{eta_g_2}) we see that the governing
relationship is between the success probability $1-\varepsilon$ and the
normalized, rather than absolute, transmission rate $R / \mu(\snr)$.
Therefore, increasing $\snr$ steepens the success probability-normalized
rate curve (similar to the effect of increasing $L$) and thus leads to a smaller
value of $\varepsilon^{\star}$.

Is is important to notice that the optimum error probabilities in
Fig. \ref{fig:opt_eps_SNR} are quite large, even for large selectivity and
at high SNR levels.  This follows from the earlier explanation that
decreasing the error probability (and thus the rate) beyond a certain point is
inefficient because the decrease in ARQ re-transmissions does not make up
for the loss in transmission rate.

To underscore the importance of not operating the PHY too reliably, in Fig.
\ref{fig:goodput_snr} goodput is plotted versus $\snr$ (dB) for $L=2$ and $10$
for the optimum error probability $\eta(\varepsilon^{\star})$ as well
as for $\varepsilon=0.1$, $0.01$, and $0.001$. Choosing $\varepsilon=0.1$ leads
to near-optimal performance for both selectivity values.  On the other hand,
there is a significant penalty if $\varepsilon=0.01$ or $0.001$ when $L=2$;
this penalty is reduced in the highly selective channel ($L=10$) but is
still non-negligible. Indeed, the most important insight from this analysis is that making the PHY
too reliable can lead to a significant performance penalty; for example, choosing
$\varepsilon=0.001$ leads to a power penalty of approximately $10$ dB for $L=2$ and
$2$ dB for $L=10$.

\begin{remark}\label{remk3}
\emph{Proposition \ref{pro1}} shows $\varepsilon_g^{\star}$ is only determined by
$\kappa$, which is completely determined by the statistics of the received mutual information per packet. This implies our results can be easily extended to different fading distributions and to MIMO by appropriately modifying $\mu(\snr)$ and $\sigma(\snr)$.
\end{remark}

\section{OPTIMAL PHY Reliability in the Non-ideal Setting}\label{sec-nonideal}
While the previous section illustrated the need to operate the PHY at a relatively unreliable level under a number of idealized assumptions, a legitimate question is whether that conclusion still holds when the idealizations of Section \ref{sec-ideal} are removed. Thereby motivated, in this section we begin to carefully study the following scenarios one by one:
\begin{itemize}
\item Finite codeword block-length.
\item Imperfect error detection.
\item Limited number of ARQ rounds per packet.
\item Imperfect ACK/NACK feedback.
\end{itemize}
As we shall see, our basic conclusion is upheld even under more realistic assumptions.

\subsection{Finite Codeword Block-length}\label{sec-finite_blk}
Although in the previous section we assumed operation at the mutual information of infinite blocklength codes, real systems must use finite blocklength codes. In order to determine the effect of finite blocklength upon the optimal PHY reliability, we study the mutual
information outage probability in terms of the \textit{information spectrum}, which captures the block error probability for finite blocklength
codes.  In \cite{buckingham2008information}, it was shown that actual codes perform quite close to the information spectrum-based outage probability.


By extending the results of \cite{laneman2006distribution, buckingham2008information}, the outage probability with blocklength $n$ (symbols) is
\begin{eqnarray}\label{lemma1}
\varepsilon(n,\snr,L,R)=\mathbb{P}\left[\frac{1}{L}\sum_{i=1}^L\log\left(1+|h_i|^2\snr\right)+\frac{1}{n}\sum_{i=1}^L\left(\sqrt{\frac{|h_i|^2\snr}{1+|h_i|^2\snr}}\cdot\sum_{j=1}^{n/L}\omega_{ij}\right)
\leq R \right],
\end{eqnarray}
where $R$ is the transmitted rate in nats/symbol, and $\omega_{i,j}$'s are i.i.d. Laplace random variables \cite{laneman2006distribution}, each with
 zero mean and variance two.
The first term in the sum is the standard infinite blocklength mutual information expression, whereas the second term is due
to the finite blocklength, and in particular captures the effect of atypical noise realizations.  This second term goes
to zero as $n\rightarrow\infty$ (i.e., atypical noise does not occur in the infinite blocklength limit), but cannot be ignored
for finite $n$.

The sum of i.i.d. Laplace random variables has a Bessel-K distribution, which is difficult to compute for large $n$ but can be
very accurately approximated by a Gaussian as verified in \cite{buckingham2008information}.  Thus, the mutual information
conditioned on the $L$ channel realizations is approximated by a Gaussian random variable:

\begin{eqnarray}\label{gauss_finite}
\mathcal{N}\left(\frac{1}{L}\sum_{i=1}^L\log\left(1+|h_i|^2\snr\right),\frac{1}{L}\sum_{i=1}^L\frac{2|h_i|^2\snr}{n\left(1+|h_i|^2\snr\right)}\right)
\end{eqnarray}
(This is different from Section \ref{sec_gauss}, where the Gaussian approximation is made with respect to the fading realizations).
Therefore, we can approximate the outage probability with finite block-length $n$ by averaging the cumulative distribution function (CDF) of (\ref{gauss_finite}) over different channel realizations:
\begin{eqnarray}
\varepsilon(n,\snr,L,R)\approx\mathbb{E}_{|h_1|, \ldots, |h_L|} Q\left(\frac{\frac{1}{L}\sum_{i=1}^L\log\left(1+|h_i|^2\snr\right)-R}{\sqrt{\frac{1}{L}\sum_{i=1}^L\frac{2|h_i|^2\snr}{n\left(1+|h_i|^2\snr\right)}}}\right).
\end{eqnarray}

In Fig. \ref{fig:ps_R_fin}, we compare finite and infinite blocklength codes by plotting success probability $1-\varepsilon$ vs. $R_{\varepsilon}$ (bits/symbol) for $L=10$ at $\snr=0$ and $10$ dB. It is clearly seen that the steepness of the success-rate curve is reduced by the finite blocklength; this is
a consequence of atypical noise realizations.

We can now consider goodput maximization for a given blocklength $n$:
\begin{eqnarray}
\varepsilon^{\star} (\snr,L,n) \triangleq ~\arg \max_{\varepsilon} R_{\varepsilon}(1-\varepsilon),
\end{eqnarray}
where both $R_{\varepsilon}$ and $\varepsilon$ are computed (numerically) in the finite codeword block-length regime.

In Fig. \ref{fig:opt_eps_snr_fin}, the optimal $\varepsilon$ vs. SNR (dB) is plotted for both finite block-length coding and infinite block-length coding.
We see that the optimal error probability becomes larger, as expected by success-rate curves with reduced steepness in Fig. \ref{fig:ps_R_fin}.
At high SNR, the finite block-length coding curve almost overlaps the infinite block-length coding curve because the unusual noise term in the mutual information expression is negligible for large values of SNR.
As expected, the optimal reliability level with finite blocklength codes does not differ significantly from the idealized case.

\subsection{Non-ideal Error Detection}\label{sec-err_dec}
A critical component of ARQ is error detection, which is generally performed using a cyclic redundancy check (CRC).  The standard usage of CRC corresponds to appending $k$ parity check
bits to $b-k$ information bits, yielding a total of $b$ bits that are then encoded (by the channel encoder) into $n$ channel symbols.
At the receiver, the channel decoder (which is generally agnostic to CRC) takes the $n$ channel symbols as inputs and produces
an estimate of the $b$ bits, which are in turn passed to the CRC decoder for error detection.
A basic analysis in \cite{Gamal} shows that if the channel decoder is in error (i.e., the $b$ bits input to the channel encoder do not match the
$b$ decoded bits), the probability of an undetected error (i.e., the CRC decoder signals correct
even though an error has occurred) is roughly $2^{-k}$.  Therefore, the overall probability of an undetected error is
well approximated by $\varepsilon\cdot 2^{-k}$.

Undetected errors can lead to significant problems, whose severity depends upon higher network layers (e.g., whether or not an additional layer of error
detection is performed at a higher layer) and the application.  However, a general perspective is provided by imposing a constraint $p$ on the
undetected error probability, i.e., $\varepsilon \cdot 2^{-k}\leq p$. Based on this constraint, we see that the constraint can be met
by increasing $k$, which comes at the cost of overhead, or by reducing the packet error probability $\varepsilon$, which can significantly reduce goodput (Section \ref{sec-ideal}).
The question most relevant to this paper is the following: does the presence of a stringent constraint on undetected error probability
motivate reducing the PHY packet error probability $\varepsilon$?

The relevant quantity, along with the undetected error probability, is the rate at which information bits are correctly delivered, which is:
\begin{eqnarray}
\eta = \frac{b-k}{n}\cdot(1-\varepsilon) = \left(R_{\varepsilon}-\frac{k}{n}\right)\cdot(1-\varepsilon),
\end{eqnarray}
where $R_{\varepsilon}-\frac{k}{n}$ is the effective transmitted rate after accounting for the parity check overhead.
It is then relevant to maximize this rate subject to the constraint on undetected error:\footnote{For the sake of compactness, the dependence of $\varepsilon^{\star}$ and $k^{\star}$ upon $\snr$, $L$ and $n$ is suppressed henceforth, except
where explicit notation is required.}:
\begin{eqnarray}\label{opt_crc}
\left(\varepsilon^{\star},k^{\star}\right)\triangleq &&\arg\max_{\varepsilon,k}~\left(R_{\varepsilon}-\frac{k}{n}\right)\cdot(1-\varepsilon)\\
&&\text{subject to}~~\varepsilon\cdot 2^{-k} \leq p\nonumber
\end{eqnarray}

Although this optimization problem (nor the version based on the Gaussian approximation) is not analytically tractable, it is easy to see that
the solution corresponds to $k^{\star} = \lceil - \log_2(p/\varepsilon^{\star}) \rceil$, where $\varepsilon^{\star}$ is roughly the optimum
packet error probability assuming perfect error detection (i.e. the solution from Section \ref{sec-ideal}).  In other words, the undetected
error probability constraint should be satisfied by choosing $k$ sufficiently large while leaving the PHY transmitted rate nearly untouched.
To better understand this, note that reducing $k$ by a bit requires reducing $\varepsilon$ by a factor of two.
The corresponding  reduction in CRC overhead is very small (roughly $1/n$), while the reduction in the transmitted rate is much larger.
Thus, if we consider the choices of $\varepsilon$ and $k$ that achieve the constraint with equality, i.e.,
$k = -\log_2(p/\varepsilon)$, goodput decreases as $\varepsilon$ is decreased below the packet error probability which is optimal under the assumption of perfect error detection.  In other words, operating the PHY at a more reliable point is not worth the small reduction
in CRC overhead.


\subsection{End-to-End Delay Constraint}\label{sec-delay}

In certain applications such as Voice-over-IP (VoIP), there is a limit on the number of re-transmissions per packet as well as a constraint on the fraction
of packets that are not successfully delivered within this limit.  If such constraints are imposed, it may not be clear
how aggressively ARQ should be utilized.

Consider a system where any packet that fails on its $d$-th attempt is discarded (i.e., at most $d-1$ re-transmissions are allowed),
but at most a fraction $q$ of packets can be discarded, where $q > 0$ is a reliability constraint.
Under these conditions, the probability a packet is discarded is $\varepsilon^d$, i.e., the probability of $d$ consecutive decoding failures,
while the long-term average rate at which packets are successfully delivered still is $R_{\varepsilon}(1-\varepsilon)$.
To understand why the goodput expression is unaffected by the delay limit, note that the number of successfully delivered packets
is equal to the number of transmissions in which decoding is successful, regardless of which packets are transmitted in each slot.  The delay
constraint only affects which packets are delivered in different slots, and thus does not affect the goodput.\footnote{The goodput
expression can alternatively be derived by computing the average number of ARQ rounds per packet (accounting for the limit $d$), and
then applying the renewal-reward theorem \cite{wolff1989stochastic}.}

Since the discarded packet probability is $\varepsilon^d$, the reliability constraint requires $\varepsilon \leq q^{1/d}$. We can thus consider maximization of goodput $R_{\varepsilon}(1-\varepsilon)$ subject to the constraint $\varepsilon \leq q^{1/d}$. Because the goodput is observed to be concave in $\varepsilon$, only two possibilities exist. If $q^{\frac{1}{d}}$ is larger than the optimal value of $\varepsilon$ for the unconstrained problem, then the optimal value of $\varepsilon$ is unaffected by $q$. In the more interesting and relevant case where $q^{\frac{1}{d}}$ is smaller than the optimal unconstrained $\varepsilon$, then goodput is maximized by choosing $\varepsilon$ equal to the upper bound $q^{\frac{1}{d}}$.

Thus, a strict delay and reliability constraint forces the PHY to be more reliable than in the unconstrained case.  However, amongst all
allowed packet error probabilities, goodput is maximized by choosing the largest.  Thus, although strict constraints do not
allow for very aggressive use of ARQ, nonetheless ARQ should be utilized to the maximum extent possible.


\subsection{Noisy ACK/NACK Feedback}\label{sec-fb_err}

We finally remove the assumption of perfect acknowledgements, and consider the realistic scenario where ACK/NACK feedback is not perfect
and where the acknowledgement overhead is factored in.   The main issue confronted here is the joint optimization of the reliability level of
the forward data channel and of the reverse acknowledgement (feedback/control) channel.  As intuition suggests, reliable communication
is possible only if some combination of the forward and reverse reliability levels is sufficiently large; thus, it is not
clear if operating the PHY at a relatively unreliable level as suggested in earlier sections is appropriate.
The effects of acknowledgement errors can sometimes be reduced through higher-layer mechanisms (e.g., sequence number check), but
in order to shed the most light on the issue of forward/reverse reliability, we focus on an extreme case where acknowledgement
errors are most harmful.  In particular, we consider a setting with delay and reliability constraints as in Section \ref{sec-delay},
and where any NACK to ACK error leads to a packet missing the delay deadline.  We first describe the feedback channel model, and
then analyze performance.

\subsubsection{Feedback Channel Model}
We assume ACK/NACK feedback is performed over a Rayleigh fading channel using a total of $f$ symbols which are distributed on
$L_{\textrm{fb}}$ independently faded subchannels; here $L_{\textrm{fb}}$ is the diversity order of the feedback channel, which need
not be equal to $L$, the forward channel diversity order.  Since the feedback is binary, BPSK is used with the symbol repeated on each sub-channel
 $f/L_{\textrm{fb}}$ times.  For the sake of simplicity, we assume that the feedback channel has the same average SNR as the forward channel, and that the fading on the feedback channel is independent of the fading on the forward channel.

After maximum ratio combining at the receiver, the effective SNR is $(f/L_{\textrm{fb}}) \cdot\snr\cdot\sum_{i=1}^{L_{\textrm{fb}}}|h_i|^2$, where $h_1,\cdots,h_{L_{\textrm{fb}}}$ are the feedback channel fading coefficients.  The resulting probability of error (denoted by $\varepsilon_{\textrm{fb}}$), averaged over the fading realizations,
is \cite{A.Goldsmith}:
\begin{eqnarray}\label{fb_rl}
\varepsilon_{\textrm{fb}} = \left(\frac{1-\nu}{2}\right)^{L_{\textrm{fb}}}\cdot\sum_{j=0}^{L_{\textrm{fb}}-1}{L_{\textrm{fb}}-1+j \choose j}\left(\frac{1+\nu}{2}\right)^{j},
\end{eqnarray}
where $\nu = \sqrt{\frac{(f/L_{\textrm{fb}})\cdot\snr}{1+(f/L_{\textrm{fb}})\cdot\snr}}$.  Clearly, $\varepsilon_{\textrm{fb}}$ is decreasing in $f$ and $\snr$.\footnote{Asymmetric decision regions can be used, in which case $0 \rightarrow 1$ and $1 \rightarrow 0$ errors have unequal probabilities.  However, this
does not significantly affect performance and thus is not considered.}

\subsubsection{Performance Analysis}
In order to analyze performance with non-ideal feedback, we must first specify the rules by which the transmitter and receiver operate.
The transmitter takes precisely the same actions as in Section \ref{sec-delay}: the transmitter immediately moves on to the
next packet whenever an ACK is received, and after receiving $d-1$ consecutive NACK's (for a single packet) it attempts that packet one last
time but then moves on to the next packet regardless of the acknowledgement received for the last attempt.  Of course, the presence of
feedback errors means that the received acknowledgement does not always match the transmitted acknowledgement.  The receiver
also operates in the standard manner, but we do assume that the receiver can always determine whether or not the packet being received
is the same as the packet received in the previous slot, as can be accomplished by a simple correlation; this reasonable assumption
is equivalent to the receiver having knowledge of acknowledgement errors.

In this setup an ACK$\rightarrow$NACK error causes the transmitter to re-transmit the previous packet, instead of moving on to the next packet.  The receiver is able to recognize that an acknowledgement error has occurred (through correlation
of the current and previous received packets), and because it already decoded the packet correctly it does not attempt to decode again.
Instead, it simply transmits an ACK once again.  Thus, each ACK$\rightarrow$NACK error has the relatively benign effect of wasting
one ARQ round.

On the other hand, NACK$\rightarrow$ACK errors have a considerably more deleterious effect because upon reception of an ACK, the transmitter
automatically moves on to the next packet.  Because we are considering a stringent delay constraint, we assume that such a
NACK$\rightarrow$ACK error cannot be recovered from
and thus we consider it as a lost packet that is counted towards the reliability constraint. This is, in some sense,
a worst-case assumption that accentuates the effect of NACK$\rightarrow$ACK errors; some comments related to this point are put forth
at the end of this section.

To more clearly illustrate the model, the complete ARQ process is shown in Fig. \ref{fig:arq_proc} for $d=3$. Each branch
is labeled with the success/failure of the transmission as well as the acknowledgement (including errors).  Circle nodes refer
to states in which the receiver has yet to successfully decode the packet, whereas triangles refer to states in which the
receiver has decoded correctly.  A packet loss occurs if there is a decoding failure followed by a NACK$\rightarrow$ACK error
in the first two rounds, or if decoding fails in all three attempts.  All other outcomes correspond to cases where
the receiver is able to decode the packet in some round, and thus successful delivery of the packet.  In these cases,
however, the number of ARQ rounds depends on the first time at which the receiver can decode and when the ACK is correctly delivered.
(If an ACK is not successfully delivered, it may take up to $d$ rounds before the transmitter moves on to the next packet.)
Notice that after the $d$-th attempt, the transmitter moves on to the next packet regardless of what acknowledgement is received; this
is due to the delay constraint that the transmitter follows.

Based on the figure and the independence of decoding and feedback errors across rounds, the probability that a packet is lost
(i.e., it is not successfully delivered within $d$ rounds) is:
\begin{eqnarray}\label{xi_del}
\xi_d =\varepsilon\cdot \varepsilon_{\textrm{fb}}+\varepsilon^2(1-\varepsilon_{\textrm{fb}})\varepsilon_{\textrm{fb}}+\cdots+\varepsilon^{d-1}(1-\varepsilon_{\textrm{fb}})^{d-2}\varepsilon_{\textrm{fb}}+\varepsilon^d(1-\varepsilon_{\textrm{fb}})^{d-1},
\end{eqnarray}
where the first $d-1$ terms represent decoding failures followed by a NACK$\rightarrow$ACK error
(more specifically, the $l$-th term corresponds to $l-1$ decoding failures and $l-1$ correct NACK transmissions, followed by another
decoding failure and a NACK$\rightarrow$ACK error), and the last term is the probability of $d$ decoding failures and $d-1$
correct NACK transmissions.  If we alternatively compute the success probability, we get the following different expression for $\xi_d$:
\begin{eqnarray}\label{xi_del2}
\xi_d = 1-\sum_{i=1}^d(1-\varepsilon)\cdot\varepsilon^{i-1}\cdot(1-\varepsilon_{\textrm{fb}})^{i-1},
\end{eqnarray}
where the $i$-th summand is the probability that successful forward transmission occurs in the $i$-th ARQ round. Based upon (\ref{xi_del}) and (\ref{xi_del2})
we see that $\xi_d$ is increasing in both $\varepsilon$ and $\varepsilon_{\textrm{fb}}$.
Thus, a desired packet loss probability $\xi_d$ can be achieved by different combinations of the forward channel reliability
and the feedback channel reliability: a less reliable forward channel requires a more reliable feedback channel, and vice versa.

As in Section  \ref{sec-delay} we impose a reliability constraint $\xi_d \leq q$, which by
 (\ref{xi_del}) translates to a joint constraint on $\varepsilon$ and $\varepsilon_{\textrm{fb}}$.
The relatively complicated joint constraint can be accurately approximated by two
much simpler constraints.  Since we must satisfy $\varepsilon\leq q^{\frac{1}{d}}$ even with perfect feedback ($\varepsilon_{\textrm{fb}}=0$), for
 any $\varepsilon_{\textrm{fb}}>0$ we also must satisfy $\varepsilon\leq q^{\frac{1}{d}}$ (this ensures that $d$ consecutive
 decoding failures do not occur too frequently).  Furthermore, by examining (\ref{xi_del}) it is evident that the
 first term is dominant in the packet loss probability expression.  Thus the constraint $\xi_d \leq q$ essentially translates
 to the simplified constraints
\begin{eqnarray} \label{simple-constraints}
\varepsilon \cdot \varepsilon_{\textrm{fb}} \leq q  \textrm{~~~and~~~} \varepsilon\leq q^{\frac{1}{d}}.
\end{eqnarray}
These simplified constraints are very accurate for values of $\varepsilon$ not too close to $q^{\frac{1}{d}}$.
On the other hand, as $\varepsilon$ approaches $q^{\frac{1}{d}}$, $\varepsilon_{\textrm{fb}}$ must go to zero very
rapidly (i.e. much faster than $q/\varepsilon$) in order for $\xi_d \leq q$.

The first constraint in (\ref{simple-constraints}) reveals a general design principle:
the \textit{combination} of the forward and feedback channel must be sufficiently reliable.
This is because $\varepsilon \cdot \varepsilon_{\textrm{fb}}$ is precisely the probability that a packet is lost because
the initial transmission is decoded incorrectly and is followed by a NACK$\rightarrow$ACK error.

Having established the reliability constraint, we now proceed to maximizing goodput while taking acknowledgement errors and ARQ overhead into account.
With respect to the long-term average goodput, by applying the renewal-reward theorem again we obtain:
\begin{eqnarray}\label{gput_db_del}
\eta  &=& \frac{n}{n+f}\cdot \frac{R_{\varepsilon} (1-\xi_d)}{\mathbb{E}[X]}.
\end{eqnarray}
where random variable $X$ is the number of ARQ rounds per packet, and $\mathbb{E}[X]$ is derived in Appendix \ref{pf_c}.
Here, $\frac{n}{n+f}$ is the feedback overhead penalty because each packet spanning $n$ symbols is followed by $f$ symbols to convey the
acknowledgement.

We now maximize goodput with respect to both the forward and feedback channel error probabilities:
\begin{eqnarray}
\left(\varepsilon^{\star},\varepsilon_{\textrm{fb}}^{\star}\right)\triangleq&&\arg\max_{\varepsilon,\varepsilon_{\textrm{fb}}}~~~\frac{n}{n+f}\cdot \frac{R_{\varepsilon}(1-\xi_d)}{\mathbb{E}[X]}\\
&&\text{subject to}~~ \xi_d \leq q\nonumber
\end{eqnarray}
noting that $\varepsilon_{\textrm{fb}}$ is a decreasing function of the number of feedback symbols $f$, according to (\ref{fb_rl}).
This optimization is not analytically tractable, but can be easily solved numerically and can be understood through examination of the
dominant relationships.  The overhead factor $n/(n+f)$ clearly depends only on $\varepsilon_{\textrm{fb}}$ (i.e., $f$).  Although the second term $R_{\varepsilon}(1-\xi_d)/\mathbb{E}[X]$ depends on both $\varepsilon$ and
$\varepsilon_{\textrm{fb}}$, the dependence upon $\varepsilon_{\textrm{fb}}$ is relatively minor as long as $\varepsilon_{\textrm{fb}}$ is reasonably small
(i.e. less than $10\%$).  Thus, it is reasonable to consider the perfect feedback setting, in which case the second term is
$R_{\varepsilon}(1-\varepsilon)$.  Therefore, the challenge is balancing the feedback channel overhead factor $\frac{n}{n+f}$ with the efficiency of the forward channel, approximately $R_{\varepsilon}(1-\varepsilon)$,
while satisfying the constraint in (\ref{simple-constraints}).  If $f$ is chosen small, the feedback errors must be compensated with a
very reliable, and thus inefficient, forward channel; on the other hand, choosing $f$ large incurs a large feedback overhead penalty but
allows for a less reliable, and thus more efficient, forward channel.

In Fig. \ref{fig:eps_theta_del}, the jointly optimal ($\varepsilon^{\star},\varepsilon_{\textrm{fb}}^{\star}$) are plotted
for a conservative set of forward channel parameters  ($L=3$ with $\snr=5$ or $10$ dB, and $n=200$ data symbols per packet), stringent
delay and reliability constraints (up to $d=3$ ARQ rounds and a reliability constraint $q=10^{-6}$), and different diversity orders ($L_{\textrm{fb}}=1,2$ and $5$) for the feedback channel.  Also plotted is the curve specifying the ($\varepsilon,\varepsilon_{\textrm{fb}}$) pairs that achieve the reliability constraint $\xi_d = q$. As discussed earlier, this curve has two distinct regions: for $\varepsilon < 0.008$ it is essentially the straight line $\varepsilon \cdot \varepsilon_{\textrm{fb}} = q$, whereas  $\varepsilon_{\textrm{fb}}$ must go to zero very quickly
as $\varepsilon$ approaches $q^{1/d}=10^{-2}$.

When $L_{\textrm{fb}}=2$, the optimal point corresponds to the transition between these two regions.  Moving to the right
of the optimal corresponds to making the PHY more reliable while making the control channel less reliable (i.e., decreasing
$\varepsilon$ and $f$), but this is suboptimal because the overhead savings do not compensate for the loss incurred by a more reliable PHY.
On the other hand, moving to the left is suboptimal because only a very modest increase in $\varepsilon$ is allowed, and this increase
comes at a large expense in terms of control symbols.  If $L_{\textrm{fb}}=5$, the optimal point is further to the left because the
feedback overhead required to achieve a desired error rate is reduced.
However, the behavior is quite different if there is no diversity on the feedback channel ($L_{\textrm{fb}}=1$).  Without
diversity, the feedback error probability decreases extremely slowly with $f$ (at order $1/f$), and thus a very large $f$
is required to achieve a reasonable feedback error probability.  In this extreme case, it is optimal to sacrifice significant PHY efficiency and choose $\varepsilon$ quite a bit
smaller than $q^{1/d}=10^{-2}$.  Notice that increasing $\snr$ moves the optimal to the left for all values of $L_{\textrm{fb}}$
because a larger SNR improves the feedback channel reliability while not significantly changing the behavior of the forward channel.

This behavior is further explained in Fig. \ref{fig:gput_eps_del}, where goodput $\eta$ (optimized with respect to $\varepsilon_{\textrm{fb}}$)
is plotted versus forward error probability $\varepsilon$ for the parameters of the previous figure, with $\snr=5$ dB and $L_{\textrm{fb}}=1$ and $2$ here.
The figure illustrates the stark contrast with respect to feedback channel diversity: with diversity (even for $L_{\textrm{fb}}=2$), the goodput increases monotonically up to a point quite close to $q^{1/d}$, while without diversity the goodput peaks at a point far below $q^{1/d}$.
This is due to the huge difference in the feedback channel reliability with and without diversity: in order to achieve $\varepsilon_{\textrm{fb}} = 10^{-3}$,
at $\snr=5$ dB without diversity $f=79$ symbols are required, whereas $f=9$ suffices for $L_{\textrm{fb}}=2$.  To more clearly understand why
the optimal point with diversity is so close to $q^{1/d}$, let us contrast two different choices of $\varepsilon$ for $L_{\textrm{fb}}=2$.  At the optimal
$\varepsilon = 8\times10^{-3}$, we require $\varepsilon_{\textrm{fb}}= 6.3\times10^{-5}$ and thus $f=34$.   On the other hand, at the suboptimal
$\varepsilon = 10^{-3}$ we require $\varepsilon_{\textrm{fb}}=10^{-3}$ and thus $f=9$.  Reducing the forward error probability by a factor of
$8$ reduces the feedback overhead from $\frac{34}{234}$ to $\frac{9}{209}$, but reduces the transmitted rate by about $50\%$.

The takeaway message of this analysis is clear: as long as the feedback channel has at least some diversity
(e.g., through frequency or antennas), stringent post-ARQ
reliability constraints should be satisfied by increasing the reliability of the feedback channel instead of increasing the forward
channel reliability.  This is another consequence of the fact that decreasing the forward channel error probability requires
a huge backoff in terms of transmitted rate, which in this case is not compensated by the corresponding decrease in feedback overhead.


\section{Hybrid-ARQ}\label{sec-harq}

While up to now we have considered simple ARQ, contemporary wireless systems often utilize more powerful hybrid-ARQ (HARQ) techniques.
When incremental redundancy (IR) HARQ, which is the most powerful type of HARQ, is implemented, a NACK triggers the transmission
of extra parity check bits instead of re-transmission of the original packet, and the receiver attempts to decode a packet on the basis of all previous
transmissions related to that packet.  This corresponds to accumulation of mutual information across HARQ rounds, and thus essentially
matches the transmitted rate to the instantaneous channel conditions without requiring CSI at the transmitter \cite{CaireTuninetti, wu2008}.
The focus of this section is understanding how the PHY transmitted rate should be chosen when HARQ is used.

Unlike simple ARQ, HARQ requires the receiver to keep information from previous rounds in memory; partly for this reason, HARQ is generally implemented  in a two-layered system (e.g., in 4G cellular networks such as LTE \cite{meyer2006arq} \cite{ekstrom2006technical}) in which the HARQ
process has to restart (triggered by a higher-layer simple ARQ re-transmission) if the number of HARQ rounds reaches a defined maximum.
The precise model we study is described as follows.  As before, each HARQ transmission (i.e., round) experiences a diversity order of $L$.
However, a maximum of $M$ HARQ rounds are allowed per packet.  If  a packet cannot be decoded after $M$ HARQ rounds,
a post-HARQ outage is declared.  This triggers a higher-layer simple ARQ re-transmission, which restarts the HARQ process for that packet.  This
two-layered ARQ process continues (indefinitely) until the packet is successfully received at the receiver.  For the sake of
simplicity, we proceed under the ideal assumptions discussed in Section \ref{sec-ideal}. Note that the case $M=1$ reverts to the simple ARQ model discussed in the rest of the paper.


Given this model, the first-HARQ-round outage probability, denoted $\varepsilon_1$, is exactly the same as the non-HARQ outage probability with the same $\snr$, diversity order $L$, and rate $R$ , i.e.,
\begin{eqnarray}
\varepsilon_1(\snr,L,R)=\mathbb{P}\left[\frac{1}{L}\sum_{i=1}^L\log_2\left(1+\snr|h_i|^2\right)\leq R\right].
\end{eqnarray}
In this expression $R$ is the transmitted rate during the first HARQ round, which we refer to as the HARQ initial rate $R_{\textrm{init}}$ hereafter.
Because IR leads to accumulation of mutual information, the number of HARQ rounds needed to decode a packet is the smallest integer $\mathcal{T}$ ($1\leq \mathcal{T}\leq M$) such that
\begin{eqnarray}\label{defnT}
\sum_{i=1}^{\mathcal{T}}\left(\frac{1}{L}\sum_{j=1}^L\log_2\left(1+\snr|h_{i,j}|^2\right)\right)>R_{\textrm{init}}.
\end{eqnarray}
Therefore, the post-HARQ outage, denoted by $\varepsilon$, is:
\begin{eqnarray}\label{pHARQ}
\varepsilon(\snr,L,M,R_{\textrm{init}}) &=& \mathbb{P} \left[\sum_{i=1}^M\left(\frac{1}{L}\sum_{j=1}^L\log_2\left(1+\snr|h_{i,j}|^2\right)\right)\leq R_{\textrm{init}}\right].
\end{eqnarray}
This is the probability that a packet fails to be decoded after $M$ HARQ rounds, and thus is the probability that the HARQ process
has to be restarted.

Using the renewal-reward theorem as in \cite{CaireTuninetti} yields the following expression for the long-term average goodput with HARQ:
\begin{eqnarray}\label{HARQ_gput}
\eta= \frac{R_{\textrm{init}}(1-\varepsilon)}{\mathbb{E}[\mathcal{T}]},
\end{eqnarray}
where the distribution of $\mathcal{T}$ is determined by (\ref{defnT}).  Our interest is in finding
the initial rate $R_{\textrm{init}}$ that maximizes $\eta$.  This optimization is not analytically tractable, but we can nonetheless
provide some insight.

In Fig. \ref{fig:gput_r_comp}, goodput is plotted versus vs. $R_{\textrm{init}}$ for $L=2$ and a maximum of $M=2$ HARQ rounds, as
well as for a system using only simple ARQ (i.e., $M=1$) with the same $L=2$, at $\snr=5$ and $10$ dB.  We immediately observe
that goodput with HARQ is maximized at a considerably higher rate than for the system without HARQ.
Although we do not have analytical proof, we conjecture that the goodput-maximizing initial rate with HARQ is
always larger than the maximizing rate without HARQ (for equal diversity order per round/transmission).  In fact, with HARQ
the initial rate should be chosen such that the first-round outage $\varepsilon_1$ is quite large, and for larger values of
$M$ the optimizer actually trends towards one.   If $\varepsilon_1$ is small, then HARQ is rarely used which means that
the rate-matching capability provided by HARQ is not exploited.   However, $R_{\textrm{init}}$ should not be chosen so large
such that there is significant probability of post-HARQ outage, because this leads to a simple ARQ re-transmission and thus
forces HARQ to re-start.  The following theorem provides an upper bound on the optimal initial rate:
\begin{theorem}\label{theo1}
For any $\snr, L$, and $M$, the optimal initial rate with HARQ is upper bounded by
$1/M$ times the optimal transmitted rate for a non-HARQ system with diversity order $ML$.
\end{theorem}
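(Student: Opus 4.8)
The plan is to reduce the post-HARQ quantities to those of a simple non-HARQ system with diversity order $ML$, and then to exploit a pathwise monotonicity of the expected number of HARQ rounds. The essential observation is that the post-HARQ outage in (\ref{pHARQ}) depends on the channel only through the sum of $ML$ i.i.d.\ copies of $\log_2(1+\snr|h|^2)$. Relabeling the $ML$ terms $\{\log_2(1+\snr|h_{i,j}|^2)\}$ as $C_1,\dots,C_{ML}$ gives $\sum_{i=1}^M\frac1L\sum_{j=1}^L\log_2(1+\snr|h_{i,j}|^2)=\frac1L\sum_{k=1}^{ML}C_k=M\cdot\frac{1}{ML}\sum_{k=1}^{ML}C_k$, so that
\begin{eqnarray}
\varepsilon(\snr,L,M,R_{\textrm{init}})=\mathbb{P}\left[\frac{1}{ML}\sum_{k=1}^{ML}C_k\leq\frac{R_{\textrm{init}}}{M}\right],\nonumber
\end{eqnarray}
which is precisely the non-HARQ outage probability (\ref{outage_nHARQ}) evaluated for diversity order $ML$ at rate $R_{\textrm{init}}/M$. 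In words, after $M$ accumulated rounds the HARQ system decodes exactly when a diversity-$ML$ codeword of rate $R_{\textrm{init}}/M$ would.

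I would then substitute this identity into the goodput (\ref{HARQ_gput}) and factor out $M$ to obtain
\begin{eqnarray}
\eta=\frac{M}{\mathbb{E}[\mathcal{T}]}\cdot g\!\left(\frac{R_{\textrm{init}}}{M}\right),\qquad g(r)\triangleq r\bigl(1-\varepsilon(\snr,ML,r)\bigr),\nonumber
\end{eqnarray}
where $g$ is exactly the goodput (\ref{gput_nHARQ2}) of the non-HARQ, diversity-$ML$ system at rate $r$. Let $R^{\star}$ denote the maximizer of $g$, i.e.\ the optimal transmitted rate for that non-HARQ system; the target is to show that the goodput-maximizing initial rate obeys $R_{\textrm{init}}^{\star}/M\leq R^{\star}$, equivalently $R_{\textrm{init}}^{\star}\leq M R^{\star}$.

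The proof concludes by showing that any initial rate exceeding $M R^{\star}$ is dominated by the choice $R_{\textrm{init}}=M R^{\star}$. Two monotonicity facts suffice. First, by the very definition of $R^{\star}$ as the argmax of $g$, we have $g(R_{\textrm{init}}/M)\leq g(R^{\star})$ for every $R_{\textrm{init}}$. Second, $\mathbb{E}[\mathcal{T}]$ is non-decreasing in $R_{\textrm{init}}$: on each realization of the $C_k$'s the first time $t$ at which the accumulated mutual information $\sum_{i=1}^{t}\frac1L\sum_{j=1}^L\log_2(1+\snr|h_{i,j}|^2)$ crosses the threshold $R_{\textrm{init}}$ in (\ref{defnT}) (capped at $M$) can only grow as the threshold grows, so $\mathcal{T}$ is pathwise non-decreasing and hence so is its expectation, making $M/\mathbb{E}[\mathcal{T}]$ non-increasing. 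For any $R_{\textrm{init}}>M R^{\star}$ both positive factors in the displayed expression for $\eta$ are therefore no larger than their values at $R_{\textrm{init}}=M R^{\star}$, so $\eta(R_{\textrm{init}})\leq\eta(M R^{\star})$ and the optimal initial rate cannot exceed $M R^{\star}$.

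The substantive content is the reduction identity of the first paragraph; once it is available, the remainder is a two-factor monotonicity comparison that I expect to be routine. The one point deserving care is the monotonicity of $\mathbb{E}[\mathcal{T}]$, which I would establish by the pathwise (coupling) argument above rather than by manipulating the law of $\mathcal{T}$ directly. It is worth emphasizing that, pleasantly, this argument needs neither concavity nor unimodality of $g$ (properties only observed numerically earlier in the paper) but merely that $R^{\star}$ is its maximizer; the entire left-shift of the optimum relative to $M R^{\star}$ is produced by the decreasing factor $M/\mathbb{E}[\mathcal{T}]$.
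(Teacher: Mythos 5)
Your proposal is correct and follows essentially the same route as the paper's proof: the same factorization $\eta = \frac{M}{\mathbb{E}[\mathcal{T}]}\cdot\frac{R_{\textrm{init}}}{M}(1-\varepsilon)$, the same identification of the post-HARQ outage with the non-HARQ outage at diversity $ML$ and rate $R_{\textrm{init}}/M$, and the same monotonicity of $M/\mathbb{E}[\mathcal{T}]$ in $R_{\textrm{init}}$. Your write-up is somewhat more careful than the paper's (the explicit pathwise coupling for $\mathbb{E}[\mathcal{T}]$ and the dominance argument showing no unimodality of the non-HARQ goodput is needed), but the underlying ideas are identical.
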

\begin{proof}
The HARQ goodput can be rewritten as
\begin{eqnarray} \label{eq-gputharq}
\eta = \frac{R_{\textrm{init}}}{M}\cdot(1-\varepsilon)\cdot \frac{M}{\mathbb{E}[\mathcal{T}]}.
\end{eqnarray}
Based on (\ref{pHARQ}) we see that the post-HARQ outage probability $\varepsilon$ is precisely the same as the outage probability for
a non-HARQ system with diversity order $ML$ and transmitted rate $R_{\textrm{init}}/M$.  Therefore, the term
$(R_{\textrm{init}}/M)(1-\varepsilon)$ in (\ref{eq-gputharq}) is precisely the goodput for a non-HARQ system with diversity order $ML$.
Based on (\ref{defnT}) we can see that the term $M/\mathbb{E}[\mathcal{T}]$ is decreasing in $R_{\textrm{init}}/M$, and thus the value of
$R_{\textrm{init}}/M$ that maximizes (\ref{eq-gputharq}) is smaller than the value that maximizes $(R_{\textrm{init}}/M)(1-\varepsilon)$.
\end{proof}
Notice that $ML$ is the maximum diversity experienced by a packet if HARQ is used, whereas $ML$ is the precise diversity order
experienced by each packet in the reference system (in the theorem) without HARQ.  Combined with our earlier observation, we see that
the initial rate should be chosen large enough such that HARQ is sufficiently utilized, but not so large such that simple ARQ
is overly used.

\section{Conclusion}
In this paper we have conducted a detailed study of the optimum physical layer reliability when simple ARQ is used to
re-transmit incorrectly decoded packets. Our findings show that when a cross-layer perspective is taken, it is optimal to use a rather
unreliable physical layer (e.g., a packet error probability of 10\% for a wide range of channel parameters).
The fundamental reason for this is that making the physical layer very reliable requires a very conservative transmitted rate
in a fading channel (without instantaneous channel knowledge at the transmitter).

 Our findings are quite general, in the
sense that the PHY should not be operated reliably even in scenarios in which intuition might suggest PHY-level reliability is
necessary.  For example, if a smaller packet error mis-detection probability is desired, it is much more efficient to utilize additional
error detection bits (e.g., CRC) as compared to performing additional error correction (i.e., making the PHY more reliable).
A delay constraint imposes an upper bound on the number of ARQ re-transmissions and an upper limit on the PHY error probability,
but an optimized system should operate at exactly this level and no lower.  Finally, when acknowledgement errors are taken into account and high end-to-end
reliability is required, such reliability should be achieved by designing a reliable feedback channel instead of a reliable
data (PHY) channel.

In a broader context, one important message is that traditional diversity metrics, which characterize how quickly the probability of error can be made very small, may no longer be appropriate for wireless systems due to the presence of ARQ. As seen in \cite{NiharAngel} in the context of multi-antenna communication, this change can significantly reduce the attractiveness of transmit diversity techniques that reduce error at the expense of rate.


\appendices
\section{PROOF of Proposition \ref{pro1}}\label{pf_a}
We first prove the strict concavity of $\eta_g$. For any invertible function $f(\cdot)$,  the following holds \cite{apostol1974mathematical}:
\begin{eqnarray}
\left(f^{-1}(a)\right)'=\frac{1}{f'(f^{-1}(a))}.
\end{eqnarray}
By combining this with $Q(x)=\int_x^{\infty}\frac{1}{\sqrt{2\pi}}e^{-\frac{t^2}{2}}dt$, we get
\begin{eqnarray}
\left(Q^{-1}(\varepsilon)\right)'=-\sqrt{2\pi}e^{\frac{(Q^{-1}(\varepsilon))^2}{2}},\label{Qinv_dev}
\end{eqnarray}
which is strictly negative. According to this, the second derivative of $\eta_g(\varepsilon)$ is:
\begin{eqnarray}\label{etag_prov}
\left(\eta_g(\varepsilon)\right)''
&=&\kappa \mu\left(Q^{-1}(\varepsilon)\right)'\left(2+(1-\varepsilon)\sqrt{2\pi}e^{\frac{(Q^{-1}(\varepsilon))^2}{2}}Q^{-1}(\varepsilon)\right).
\end{eqnarray}
Because $\kappa \left(Q^{-1}(\varepsilon)\right)' < 0$, in order to prove $\left(\eta_g(\varepsilon)\right)''<0$
we only need to show that the expression inside the parenthesis in (\ref{etag_prov}) is strictly positive.
If we substitute $\varepsilon=Q(x)$ (here we define $x=Q^{-1}(\varepsilon)$) , then we only need to prove $(Q(x)-1)e^{\frac{x^2}{2}}x<\sqrt{\frac{2}{\pi}}$.
Notice when $x\geq0$, the left hand side is negative (because $Q(x) \leq 1$) and the inequality holds.
When $x<0$, the left hand side becomes $Q(-x)e^{\frac{x^2}{2}}(-x)$. From \cite{Nick}, $Q(-x)<\frac{1}{\sqrt{2\pi}(-x)}e^{-\frac{x^2}{2}}$, so if $x<0$,
\begin{eqnarray}
(Q(x)-1)e^{\frac{x^2}{2}}x<\frac{1}{\sqrt{2\pi}(-x)}e^{-\frac{x^2}{2}}e^{\frac{x^2}{2}}(-x)=\frac{1}{\sqrt{2\pi}}<\sqrt{\frac{2}{\pi}}.
\end{eqnarray}
As a result, the second derivative of $\eta_g(\varepsilon)$ is strictly smaller than zero and thus
$\eta_g$ is strictly concave in $\varepsilon$.
Since $\eta_g$ is strictly concave in $\varepsilon$, we reach the fixed point equation in (\ref{first_der}) by setting the first derivative to zero.
The concavity of $\eta_g$ implies $\left(\eta_g(\varepsilon)\right)'$ is decreasing in $\varepsilon$, and thus
from (\ref{first_der}) we see that $\varepsilon_g^{\star}$ is increasing in $\kappa$.

\section{Expected ARQ Rounds with Acknowledgement Errors}\label{pf_c}
If the ARQ process terminates after $i$ rounds ($1\leq i\leq d-1$), the reasons for that can be:
\begin{itemize}
\item The first $i$ decoding attempts are unsuccessful, the first $i-1$ NACKs are received correctly,
but a NACK$\rightarrow$ACK error happens in the $i$-th round, the probability of which is $\varepsilon^i\cdot(1-\varepsilon_{\textrm{fb}})^{i-1}\cdot\varepsilon_{\textrm{fb}}$.
\item The packet is decoded correctly in the $j$-th round (for $1 \leq j \leq i$), but the ACK is not
correctly received until the $i$-th round.  This corresponds to $j-1$ decoding failures with
correct acknowledgements, followed by a decoding success and $i-j$ acknowledgement errors (ACK$\rightarrow$NACK),
and then a correct acknowledgement: $\sum_{j=1}^i \varepsilon^{j-1}  (1-\varepsilon_{\textrm{fb}})^j
(1 - \varepsilon) \varepsilon_{\textrm{fb}}^{i-j} $.
\end{itemize}
These events are all exclusive, and thus we can sum the above probabilities.  For $X=d$, we notice that the ARQ process
takes the maximum of $d$ rounds if:
\begin{itemize}
\item There are $d$ decoding failures with $d-1$ correct NACKs, the probability of which is $\varepsilon^{d-1}\cdot(1-\varepsilon_{\textrm{fb}})^{d-1}$.
\item The packet is decoded correctly in the $j$-th round (for $1 \leq j \leq d-1$), but the ACK is never
received correctly.  This corresponds to $j-1$ decoding failures with correct NACKs,
followed by a decoding success and $d-j$ acknowledgement errors (ACK$\rightarrow$NACK):
$\sum_{j=1}^{d-1} \varepsilon^{j-1}  (1-\varepsilon_{\textrm{fb}})^{j-1}
(1 - \varepsilon) \varepsilon_{\textrm{fb}}^{d-j} $.
\end{itemize}
These events are again exclusive.  Therefore, the expected number of rounds is:
\begin{eqnarray}\label{es_fb}
\mathbb{E}[X] &=& \sum_{i=1}^{d-1}i\cdot\left(
\varepsilon^i\cdot(1-\varepsilon_{\textrm{fb}})^{i-1}\cdot\varepsilon_{\textrm{fb}} + \sum_{j=1}^i \varepsilon^{j-1}  (1-\varepsilon_{\textrm{fb}})^j
(1 - \varepsilon) \varepsilon_{\textrm{fb}}^{i-j} \right) \nonumber\\
&&+d\cdot\left( \varepsilon^{d-1}\cdot(1-\varepsilon_{\textrm{fb}})^{d-1} + \sum_{j=1}^{d-1} \varepsilon^{j-1}  (1-\varepsilon_{\textrm{fb}})^{j-1}
(1 - \varepsilon) \varepsilon_{\textrm{fb}}^{d-j} \right).
\end{eqnarray}


\newpage

\begin{figure}[ht]
\begin{center}
\includegraphics[width = 3.9in]{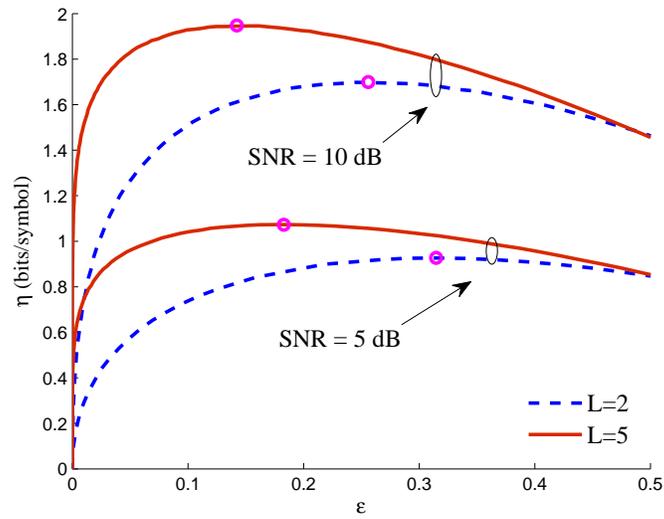}
\caption{Gooput $\eta$ (bits/symbol) vs. PHY outage probability $\varepsilon$ for $L=2,5$, $\snr=10$ dB} \label{fig:gput_eps}
\end{center}
\end{figure}

\begin{figure}[ht]
\begin{center}
\includegraphics[width = 3.9in]{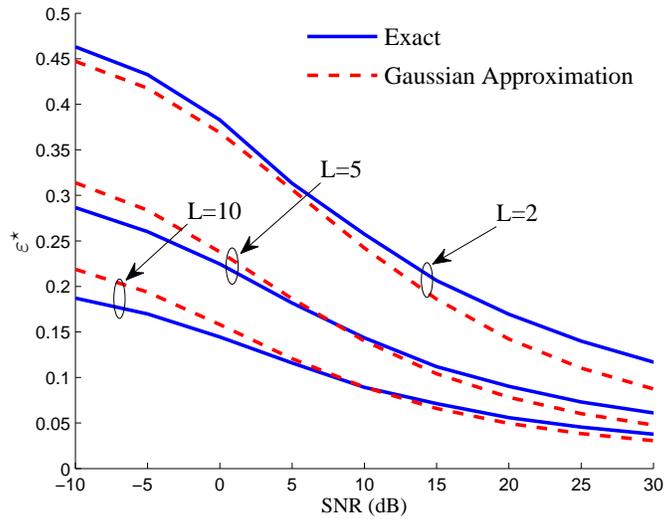}
\caption{Optimal $\varepsilon$ vs. $\snr$ (dB) for $L=2,5,10$} \label{fig:opt_eps_SNR}
\end{center}
\end{figure}

\begin{figure}[ht]
\centering
\subfigure[$1-\varepsilon$ vs. $R_{\varepsilon}$ (bits/symbol)]{\label{fig:suc_pr_R}
\includegraphics[width = 3.1in]{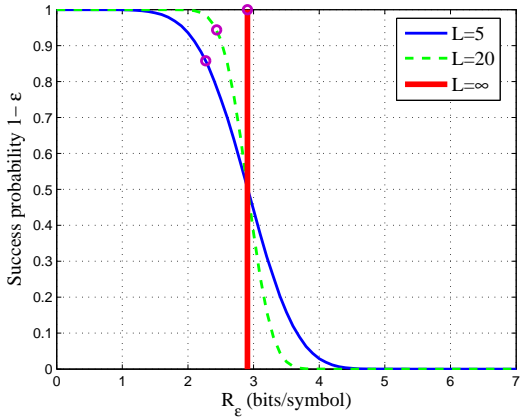}}
\subfigure[$\eta$ (bits/symbol) vs. $R_{\varepsilon}$ (bits/symbol)]{\label{fig:goodput_eps}
\includegraphics[width = 3.1in]{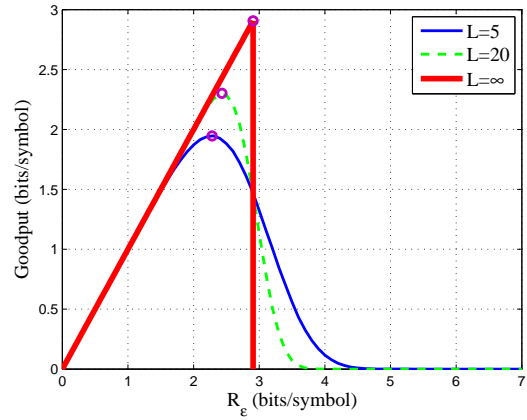}}
\caption{Success probability $1-\varepsilon$ and $\eta$ (bits/symbol) vs. $R_{\varepsilon}$ (bits/symbol) for $\snr=10$ dB}\label{fig:phy_arg}
\end{figure}

\begin{figure}[ht]
\centering
\subfigure[$L=2$]{\label{fig:goodput_snr_2}
\includegraphics[width = 3.1in]{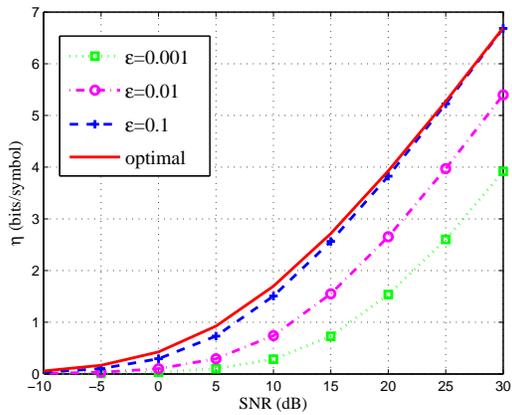}}
\subfigure[$L=10$]{\label{fig:goodput_snr_10}
\includegraphics[width = 3.1in]{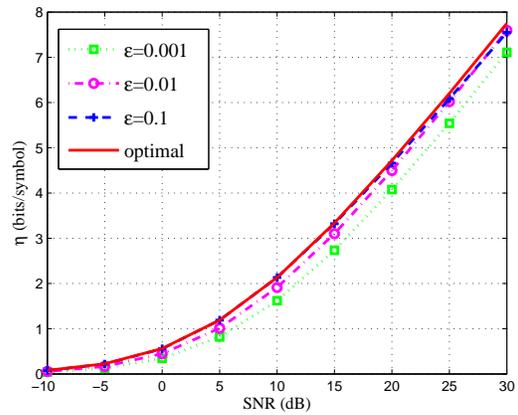}}
\caption{$\eta$ (bits/symbol) vs. $\snr$ (dB),
for $\varepsilon=0.001,0.01,0.1$, and
$\varepsilon^{\star}$}\label{fig:goodput_snr}
\end{figure}

\begin{figure}[ht]
\begin{center}
\includegraphics[width = 4.5in]{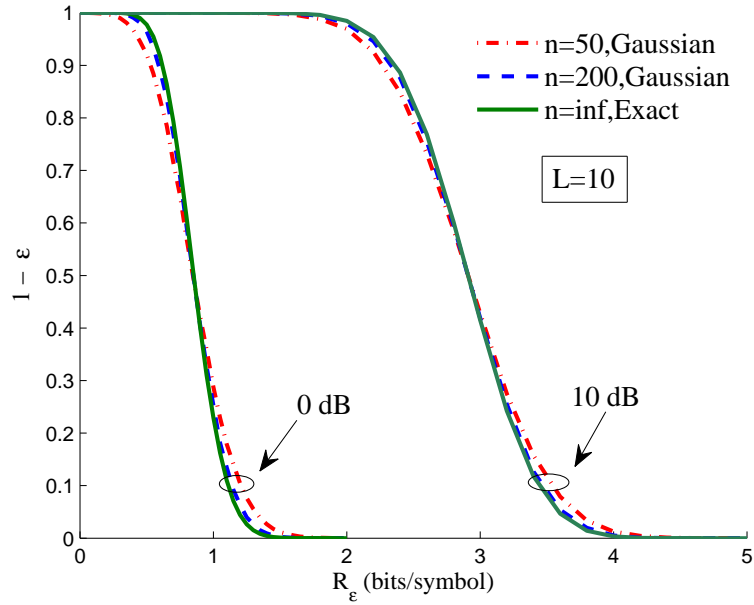}
\caption{Success probability $1-\varepsilon$  vs. transmitted rate $R_{\varepsilon}$ (bits/symbol) for $n=50,200,\infty$, $L=10$ at $\snr=0$ and $10$ dB} \label{fig:ps_R_fin}
\end{center}
\end{figure}

\begin{figure}[ht]
\begin{center}
\includegraphics[width = 4.5in]{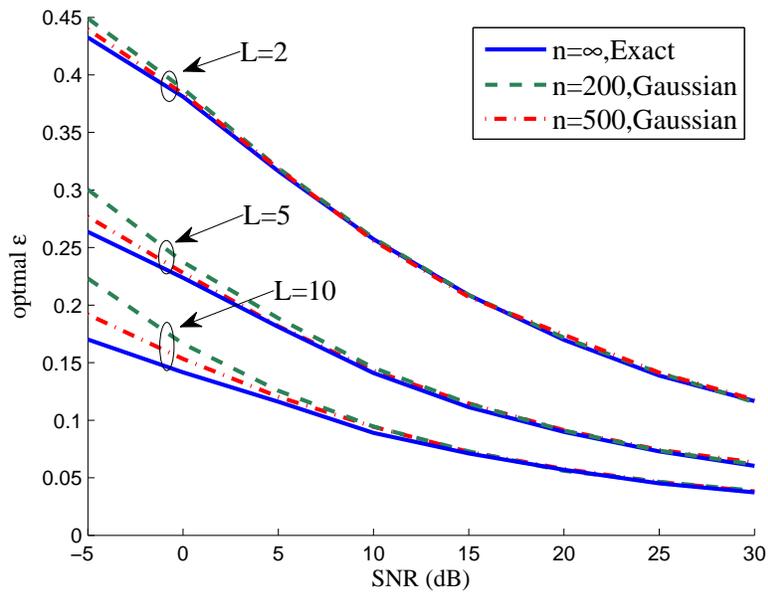}
\caption{Optimal $\varepsilon$  vs. $\snr$ (dB) for $L=2,5,10$ and $n=200, 500$ and $\infty$} \label{fig:opt_eps_snr_fin}
\end{center}
\end{figure}

\begin{figure}[ht]
\begin{center}
\includegraphics[width = 5in]{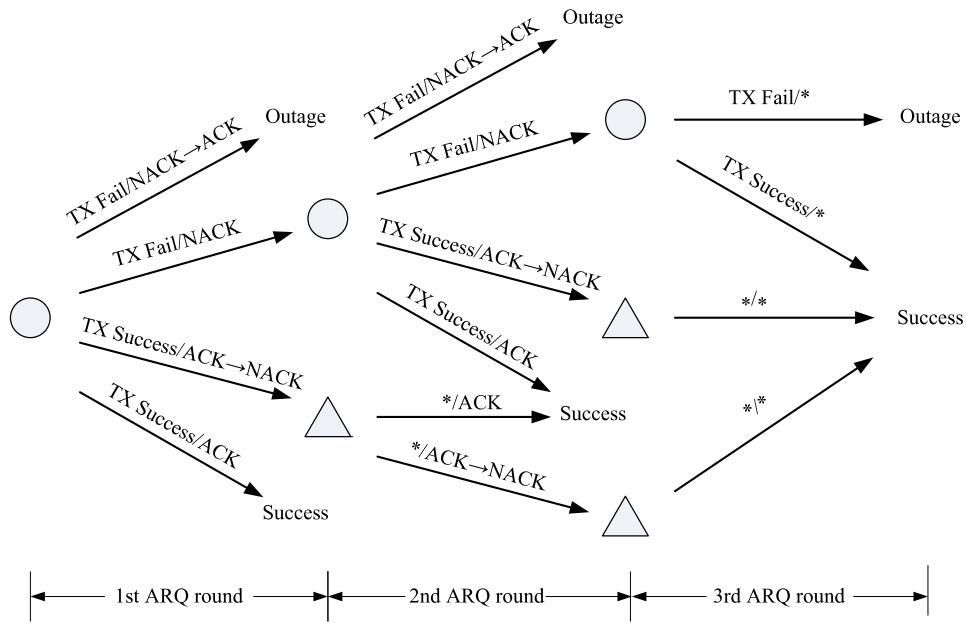}
\caption{The ARQ process with non-ideal feedback with an end-to-end delay constraint $d=3$.} \label{fig:arq_proc}
\end{center}
\end{figure}

\begin{figure}[ht]
\begin{center}
\includegraphics[width = 4.4in]{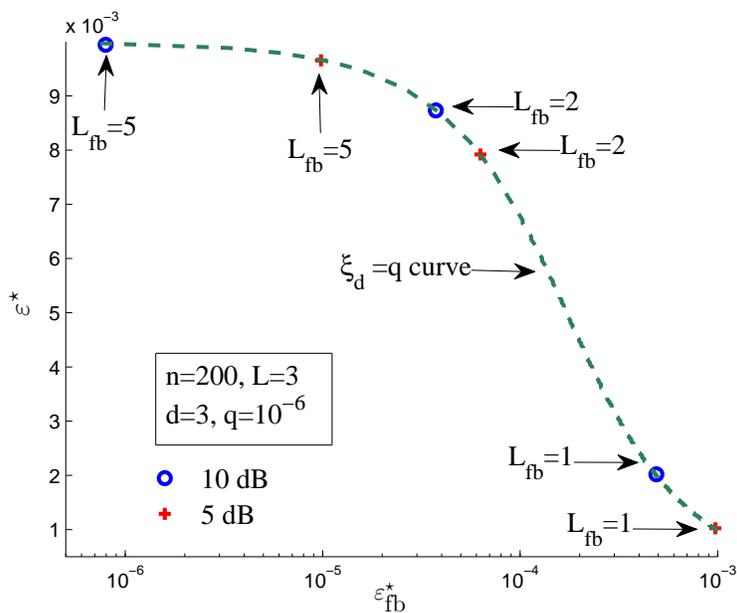}
\caption{($\varepsilon^{\star},\varepsilon_{\textrm{fb}}^{\star}$) with $L_{\textrm{fb}}=1,2$ and $5$ in Rayleigh fading feedback channel for $n=200$, $d=3$, $q=10^{-6}$, and $L=3$ at $\snr=5$ and $10$ dB. The curve specifying the ($\varepsilon,\varepsilon_{\textrm{fb}}$) pairs that achieve the reliability constraint $\xi_d = q$ is also plotted.} \label{fig:eps_theta_del}
\end{center}
\end{figure}

\begin{figure}[ht]
\begin{center}
\includegraphics[width = 4.4in]{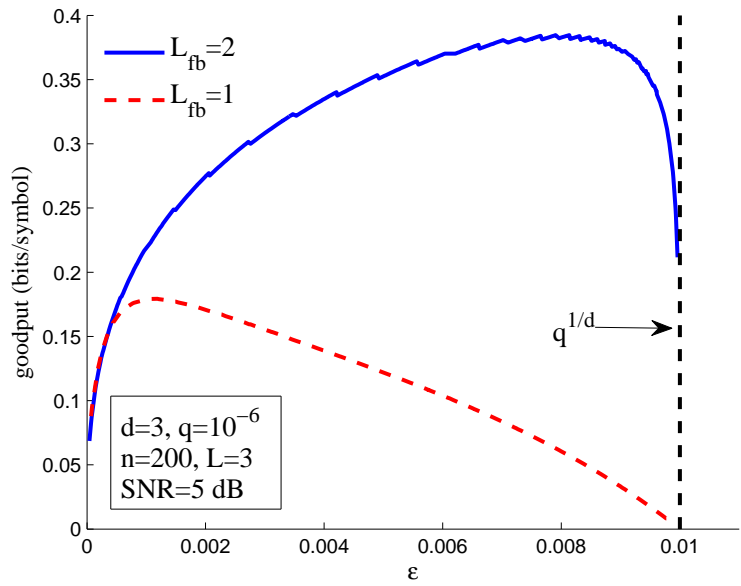}
\caption{Goodput $\eta$ (bits/symbol) vs. PHY outage probability $\varepsilon$ with $L_{\textrm{fb}}=1$ and $2$ in Rayleigh fading feedback channel for $\snr=5$ dB, $n=200$, $L=3$, $d=3$ and $q=10^{-6}$. } \label{fig:gput_eps_del}
\end{center}
\end{figure}


\begin{figure}[ht]
\begin{center}
\includegraphics[width = 4.5in]{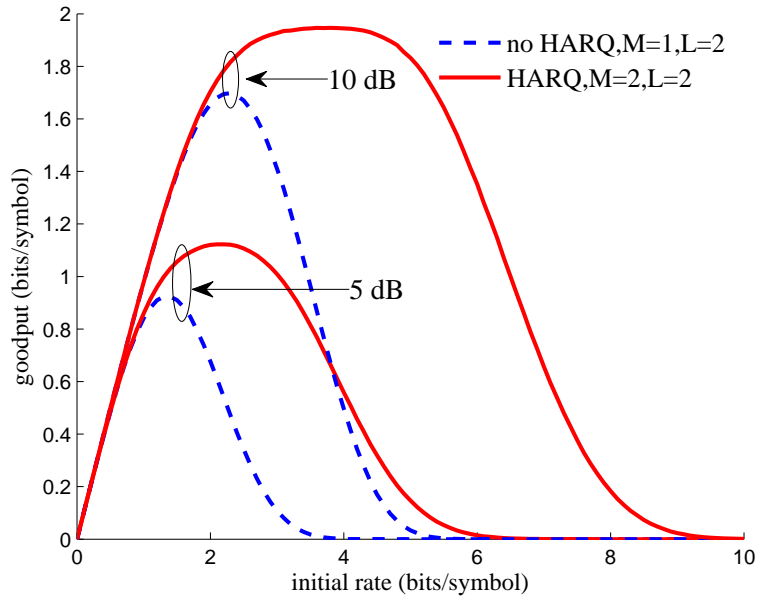}
\caption{Goodput (bits/symbol)  vs. initial rate (bits/symbol) with HARQ for $M=2$ and $L=2$ and without HARQ for $M=1$ and $L=2$ at $\snr=5,10$ dB. } \label{fig:gput_r_comp}
\end{center}
\end{figure}

\end{document}